\pgfplotsset{compat=1.15}
  \let\oldsubsubfigure=\subsubfigure
  \renewcommand{\subsubfigure}{\expandafter\def\csname @captype\endcsname{subfigure}%
    \oldsubsubfigure}%
\newsavebox{\subsubfloatbox}% probably overkill
\newcommand{\subsubfloat}[2][\empty]% #1 = caption (optional), #2 = image
{\bgroup
  \captionsetup[subsubfigure]{font=footnotesize}%
  \savebox\subsubfloatbox{#2}%
  \begin{subsubfigure}[t]{\wd\subsubfloatbox}
    \usebox\subsubfloatbox
    \ifx\empty#1\relax
      \stepcounter{subsubfigure}%
    \else
      \caption{#1}%
    \fi
  \end{subsubfigure}%
\egroup}
\newcommand*{\affaddr}[1]{#1} % No op here. Customize it for different styles.
\newcommand*{\affmark}[1][*]{\textsuperscript{#1}}
\begin{document}

\title{Geometric Stability Estimates For 3D-Object Encryption Through Permutations and Rotations %\thanks{Grants or other notes
%about the article that should go on the front page should be
%placed here. General acknowledgments should be placed at the end of the article.}
}
%\subtitle{Do you have a subtitle?\\ If so, write it here}

\titlerunning{Geometric Estimates for 3D-Object  Ciphers}        % if too long for running head
\author{M.H. Annaby\affmark[1,$\ast$] \and M.E. Mahmoud\affmark[1] \and H.A. Abdusalam\affmark[1] \and H.A. Ayad\affmark[1] \and M.A. Rushdi\affmark[2]}

\authorrunning{M.H. Annaby  et al. } % if too long for running head

\institute{M.H. Annaby \at
             % Tel.: +123-45-678910\\
              %Fax: +123-45-678910\\
            \email{mhannaby@sci.cu.edu.eg}        %  \\
%             \emph{Present address:} of F. Author  %  if needed
              \and
              \affaddr{\affmark[$\ast$]Corresponding author}\\
              \affaddr{\affmark[1]Department of Mathematics, Faculty of Science, Cairo University, Giza, 12613, Egypt}
              \affaddr{\affmark[2]Department of Biomedical Engineering and Systems, Faculty of Engineering, Cairo University, Giza, 12613, Egypt}
}

\date{Received: date / Accepted: date}
% The correct dates will be entered by the editor

\maketitle

\begin{abstract}
We compute precise estimates for dimensions of 3D-encryption techniques of 3D-point clouds which use permutations and rigid body motion, in which geometric stability is to be guaranteed. Few attempts are made in this direction. An attempt is established using the notions of dimensional and spatial stability by  Jolfaei et al. (2015), who also proposed a 3D object encryption algorithm, claiming that it preserves dimensional and spatial stability. However, as we mathematically prove neither the algorithm, nor the  associated estimates are correct. We introduce more rigorous definitions of the geometric stability of such 3D data encryption algorithms, followed by dimensionality measures. 
\keywords{3D-Point clouds, 3D object encryption,  rigid body motion,  geometric stability}
% \PACS{PACS code1 \and PACS code2 \and more}
% \subclass{MSC code1 \and MSC code2 \and more}
\end{abstract}

%\maketitle

%%%%%%%%%%%%%%%%%%%%%%%%%%%%%%%%%%%%%%%%%%%%%%%%%%%%%%%%%%%%%%%%%%%%%%%%%%%%%%%%%%%%%%%%%%%%%%%%%%%%%%%%%%%%%%%%%%%%%%%%%%%%%%%%%%%%%%%%%%%%%%%%%%%%%%%%%%%%%%%%%%%%%
%% Introduction %%
%%%%%%%%%%%%%%%%%%%
\section{Introduction}
\label{section:intro}
The 3D object structures are widely utilized in various applications. See e.g.  \cite{lee2017fundamentals} for 3D modeling and 3D printing,  \cite{li2005collaborative} for computer-aided design,  \cite{popkonstantinovic20123d} for 3D-animation, \cite{gilbert2004models,vernon2002benefits} for interactive anatomical modeling and education,  \cite{pal2001easy} for prototyping and manufacturing,  \cite{scheenstra2005survey} for face recognition, \cite{gomez2015intelligent} for surveillance systems; and the surveys \cite{scheenstra2005survey,thomas2012survey}. The wide applicability and potential vulnerabilities of the 3D object models raised concerns of security and access control. Unauthorized access to 3D models leads to significant security breaches or business losses. Therefore, there has been a growing demand for 3D object encryption algorithms of high security strength, integrity, and robustness to attacks.  While numerous 1D and 2D ciphers have been designed, particularly for digital images, the 3D object encryption algorithms are still quite limited. 

Several design aspects should be considered in the construction of 3D ciphers. A primary aspect is the design of a cipher with reasonable efficiency. Another design aspect of a 3D cipher is its geometric stability. Moreover, a 3D cipher should demonstrate robustness against statistical, differential and chosen-plaintext attacks. Cipher robustness is typically strengthened by using chaotic maps which are dynamical systems that take some initial conditions and control parameters to produce chaotic sequences. 

In addition to implementing chaotic permutations, several schemes used transformations of motion of rigid body, as well as shuffling coordinates of the plaintext with or without mixing the plaintext with chaotically selected 3D-objects, see e.g. \cite{jia2019encryption,jin20163d,jin20173d,jolfaei20143d}. Nevertheless, both shuffling coordinates or the transformations may affect both the correctness of the algorithm and/or the geometric stability of the ciphertext if they are not carefully established. In \cite{jolfaei20143d}, Jolfaei et al. introduced the notions of dimensional and spatial stability of 3D-point cloud ciphers. Furthermore, they constructed a cipher that is based on permutations of coordinates and localized rotations, and claimed that this cipher would be dimensionally and spatially stable. However, as we show through counterexamples and mathematical proofs, these stability notions are not consistent, and the associated cipher is not guaranteed to work correctly. Indeed, the major reason for the geometric instability of the cipher of \cite{jolfaei20143d} is the rotation and shuffling the coordinates of the plaintext with the coordinates of randomly created points. As we indicate in Section \ref{section:Geometric_Stability} such a process leads to a geometric instability.  

Section \ref{section:Related_Works} outlines the algorithm of the 3D object encryption  of Jolfaei et al. \cite{jolfaei20143d}, including a counterexample that proves that the algorithm of \cite{jolfaei20143d} is not correct. The notions of dimensional and spatial stability  are discussed in Section \ref{section:Geometric_Stability}. We  point out to the faults of the mathematical proofs given in \cite{jolfaei20143d} and another counterexample is given.  We revise the notion of the geometric stability and derive exact estimates for ciphertexts under reotations and shuffling coordinates. 

It is worthwhile to mention that, while our approach is estimating dimensionality connected with the algorithm of \cite{jolfaei20143d}, it exhibits a general treatment of measuring geometric stability of encryption algorithms that are based on shuffling coordinates and the motion of rigid body transformations.

%%%%%%%%%%%%%%%%%%%%%%%%%%%%%%%%%%%%%%%%%%%%%%%%%%%%%%%%%%%%%%%%%%%%%%%%%%%%%%%%%%%%%%%%%%%%%%%%%%%%%%%%%%%%%%%%%%%%%%%%%%%%%%%%%%%%%%%%%%%%%%%%%%%%%%%%%%%%%%%%%%%%%
%%%%%%%%%%%%%%%%%%%%%%%%%%%%%%%%%%%%%%%%%%%%%%%%%%%%%%%%%%%%%%%%%%%%%%%%%%%%%%%%%%%%%%%%%%%%%%%%%%%%%%%%%%%%%%%%%%%%%%%%%%%%%%%%%%%%%%%%%%%%%%%%%%%%%%%%%%%%%%%%%%%%%
%% Related Work %%
%%%%%%%%%%%%%%%%%%%
\section{Cipher Instability} 
\label{section:Related_Works}
In the following, we briefly review the 3D point cloud encryption algorithm of Jolfaei et al. \cite{jolfaei20143d}. Let $K^0 = (k_1^0, k_2^0, k_3^0, k_4^0, k_5^0, k_6^0)^\top,$ where $k^0_i \in [-1, 1],\,1 \leq i \leq 6$ be fixed and $\mathcal{P}_N=\{P^1, P^2, \cdots, P^N\} \subseteq \mathbb{R}^3$ be a given 3D-point cloud, $N \in \mathbb{N} = \{ 1, 2, \cdots \}, \ N \geq 2$ is fixed. We assume that $\mathcal{P}_N$ lies in a bounding sphere of radius $r_p > 0 $ and center $P^0 \in \mathbb{R}^3$, i.e. $\|P^j-P^0\| \leq r_p$. Here, $P^j = ( p_1^j, p_2^j, p_3^j )^\top$, where $A^{\top}$ denotes the transpose of $A$. Using the key $K^0$ and the Chebyshev map, $D > 2$,
\begin{equation}\label{Eq: Chebmap}
k^j_i=\cos(D\cos^{-1}(k^{j-1}_i)), \; 1\leq i \leq 6, \ 1\leq j\leq 2N,
\end{equation}
cf.  \cite{GEISEL1984263,NIANSHENG2011761},
$K^j=(k_1^j, k_2^j, k_3^j, k_4^j, k_5^j, k_6^j)^\top,$ 
where $k^j_i \in [-1, 1],$  are created. Then, random points and angles are generated by
\begin{equation}\label{Eq: Randpseudopts}
O^j_{\upsilon} = P^0 + r_p \cdot \left(k_1^{j+\lfloor \frac{\upsilon}{2}\rfloor N}, k_2^{j+\lfloor \frac{\upsilon}{2}\rfloor N}, k_3^{j+\lfloor \frac{\upsilon}{2}\rfloor N} \right)^{\top},
\end{equation}
\begin{equation}\label{eq:rotAngs}
\Lambda ^j_{\upsilon} = (\alpha^j_{\upsilon 1},\alpha^j_{\upsilon 2},\alpha^j_{\upsilon 3})^{\top}, \;
\alpha_i^{j+\lfloor \frac{\upsilon}{2}\rfloor N} = \lfloor 180^{\circ} k^{j+\lfloor \frac{\upsilon}{2}\rfloor N}_{i+3}\rfloor,
\end{equation}
where $i = 1, 2, 3$, \ $1\leq j\leq N$, and $\lfloor\cdot\rfloor$ denotes the floor function. Here $\upsilon \in \{1, 2\}$ is an index for each round of the cipher. So, $\mathcal{O}^\upsilon_N = \{ O^1_{\upsilon}, \cdots, O^N_{\upsilon}\} \subseteq \mathbb{R}^3$
indicates the set of random points created for a specific round $\upsilon$. Notice that $\|O^j_{\upsilon}-P^0\| \leq \sqrt{3}r_{p}$, $\|\cdot\|$ denotes the Euclidean norm. Let $X$ be the set of the $6N$ coordinates of the points contained in $\mathcal{P}_N \cup \mathcal{O}_N$. If $N>8,$ the set $X$ is split into $\lfloor\frac{N}{8}\rfloor$ subsets, provided that $N\ge8,$ where each subset is randomly permuted to get a new set of points $P^{\prime j}, O^{\prime j}, 1 \leq j \leq N$. The first round cipher of $\mathcal{P}_N$ is obtained via localized rotations of the $P^{\prime j}$ points around the $O^{\prime j}$ points as follows \cite[Eq. (16)]{jolfaei20143d},
\begin{equation}\label{eq:rot}
C_1^{j}= \psi \cdot R^j(\alpha^j_{11},\alpha^j_{12},\alpha^j_{13}) \times [P^{\prime j}-O_1^{\prime j}]+O_1^{\prime j}, 
\end{equation}
$1\leq j\leq N,$ where $\psi \in (0, \frac{1}{9}]$, is a factor to guarantee the geometric stability of ciphertext and $R^j(\alpha^j_{11},\alpha^j_{12},\alpha^j_{13})$ is a 3D rotation matrix. Thus, \cite[p.147]{goldstein:mechanics} $R^j=R^j_1\cdot R^j_2\cdot R^j_3$, where $R^j_1,\, R^j_2,\, R^j_3$ are the rotation matrices about $X, Y, $ and $ Z$ axes with angles $\alpha^j_{1},\alpha^j_{2},\alpha^j_{3}$, respectively.
Another round of encryption is carried out similarly which is dispensable as we see that the information cannot be retrieved after implementing one round. In Section \ref{section:Geometric_Stability}, we investigate the geometric stability of the algorithm of \cite{jolfaei20143d}, but first we show that this algorithm may not work.

A major fault in the aforementioned algorithm of \cite{jolfaei20143d} is the random permutation stage. While the encryption process may work, the decryption process does not  work since, for some permutations, the encryption process simply destroys the data.  Jolfaei et al. \cite{jolfaei20143d} suggest that using several encryption rounds may give a higher level of security at the expense of lower  efficiency \cite[p. 412]{jolfaei20143d}. However, irrespective of the number of encryption rounds, the encryption-decryption process may fail. The following illustrative counterexample indicates that even with one round of simple 2D encryption, the above algorithm does not work.

%%%%%%%%%%%%%%%%%%%%%%%%%%%%%%%%%%%%%%%%%%%
%% counterexample 1 %%
%%%%%%%%%%%%%%%%%%%%%%

\begin{example}\label{counter_ex1}
Assume that we are given a ciphertext $\mathcal{C}_2 = \{ C^1, C^2\}, C^j = \begin{pmatrix} c^j_1 \\ c^j_2  \end{pmatrix} , j= 1, 2 $ and that we are given a key that generates a keystream $k^1_1 = \frac{1}{3}, \ k^1_2 = \frac{-1}{2}, \ k^1_3 = \frac{1}{4}, \ k^1_4 = \frac{-2}{3}, \ k^1_5 = \frac{4}{5}, \ k^1_6 = \frac{1}{3}.$ Let the plaintext be $\mathcal{P}_2 = \{P^1, P^2\}, P^1= \begin{pmatrix} x_1 \\ y_1  \end{pmatrix}, P^2 = \begin{pmatrix} x_2 \\ y_2  \end{pmatrix}, P^0 = \begin{pmatrix} 0 \\ 0  \end{pmatrix}$. Thus, $X = \left\{ x_1, y_1, x_2, y_2, \frac{1}{3}, \frac{-1}{2}, \frac{-2}{3}, \frac{4}{5} \right\}.$ 
Let us form $P^{\prime}_j, O^{\prime}_j, \ j = 1,2,$ to be
\begin{equation*}\label{eq:P_O_prime_CE1}
\begin{matrix}
P^{\prime}_1 = \begin{pmatrix} x_1 \\ y_2  \end{pmatrix}, & 
P^{\prime}_2 = \begin{pmatrix} y_1 \\ \frac{1}{3}  \end{pmatrix}, &
O^{\prime}_1 = \begin{pmatrix} x_2 \\ \frac{4}{5}  \end{pmatrix}, &
O^{\prime}_2 = \begin{pmatrix} \frac{-1}{2} \vspace{1mm} \\ \frac{-2}{3} \end{pmatrix}.
\end{matrix}
\end{equation*}
Thus (\ref{eq:rot}) leads to
\begin{equation}\label{eq:inconsistent_CE1}
\begin{array}{l}
x_1 + \left(\displaystyle\frac{\sqrt{2}}{\psi} -1\right)x_2 - y_2 = \frac{\sqrt{2}}{\psi} c^1_1- \frac{4}{5},\\
x_1 - x_2 + y_2 = \displaystyle\frac{\sqrt{2}}{\psi} \left( c^1_2 - \frac{4}{5} \right) + \frac{4}{5},\\
y_1 = \displaystyle\frac{2}{\sqrt{3} \psi} \left( c^2_2 + \frac{2}{3} \right) - \frac{2 + \sqrt{3}}{2 \sqrt{3}},
\end{array}
\end{equation}
and the decryption problem has no solution. Here $r_p=1.$
\end{example}
\begin{remark}
Certain choices of $\psi$ will also lead to inconsistency. It is worthy to mention that this simple example indicates the mistake of the algorithm although we take only 2D plaintext with two points, i.e. $N=2.$ We do not need to take $N>8$, which can be easily done as well, with more complicated systems. One  noticed that the paper \cite{jolfaei20143d} includes no examples of a complete encrypted-decrypted 3D procedure.
\end{remark}
Before we investigate the issues of geometric stability in the next section, we would like to mention that the set $X$ of the coordinates of $\mathcal{P}_N$ and $\mathcal{O}^\upsilon_N$ constitutes an array and the coordinates might be repeated. So, equations like \cite[Eqs. (8), (9)]{jolfaei20143d} are meaningless.

\begin{figure*}[]
	\centering
	\subfloat[]{\includegraphics[width=1.8in, height=2in]{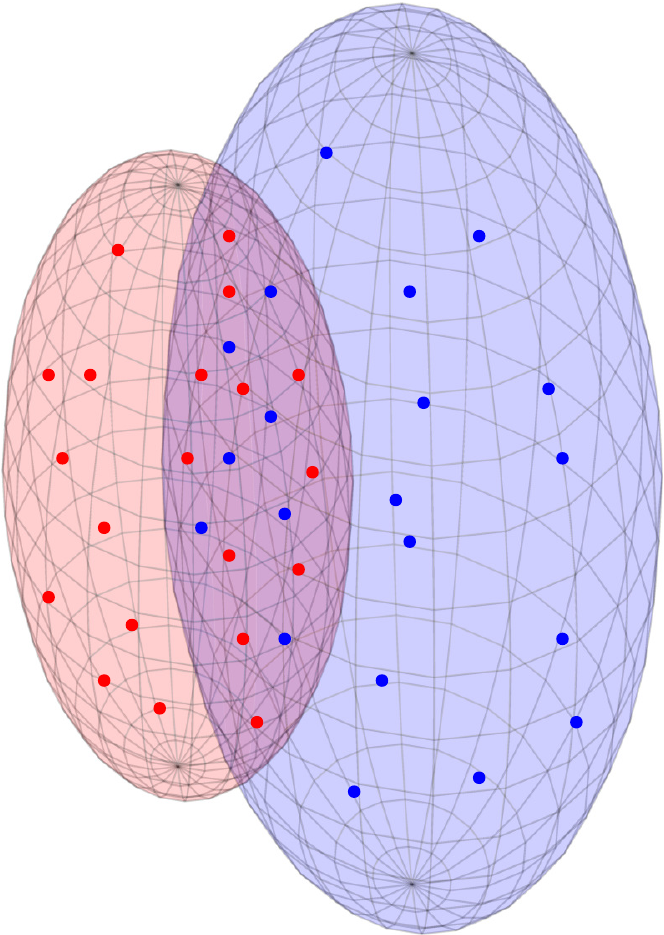}
		\label{fig:GS_case1}}
	\hfil \hfill
	\subfloat[]{\includegraphics[width=1.8in, height=2in]{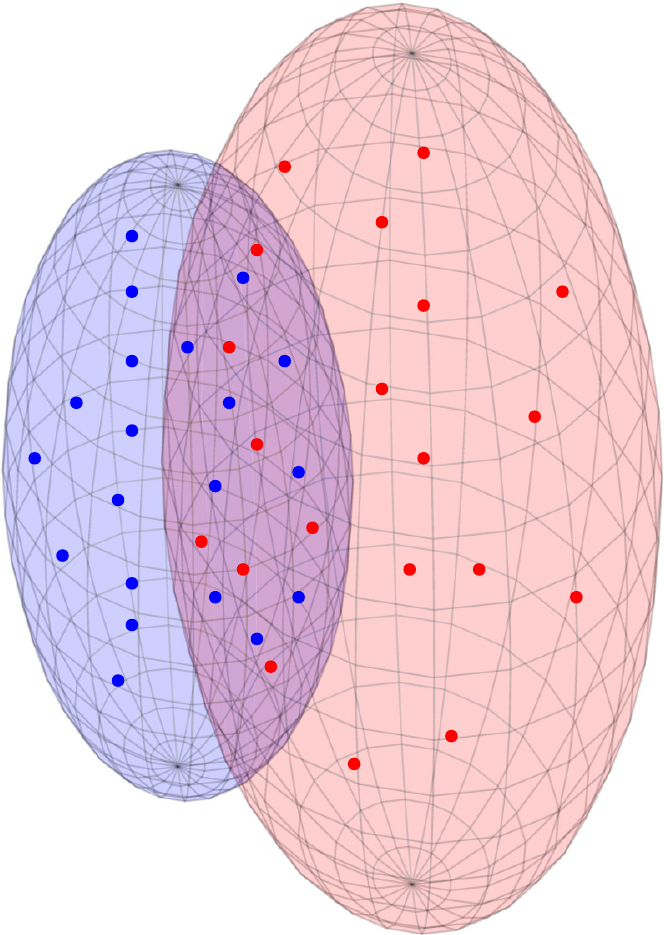}
		\label{fig:GS_case2}}
	\hfil \hfill
	\subfloat[]{\includegraphics[width=1.8in, height=2in]{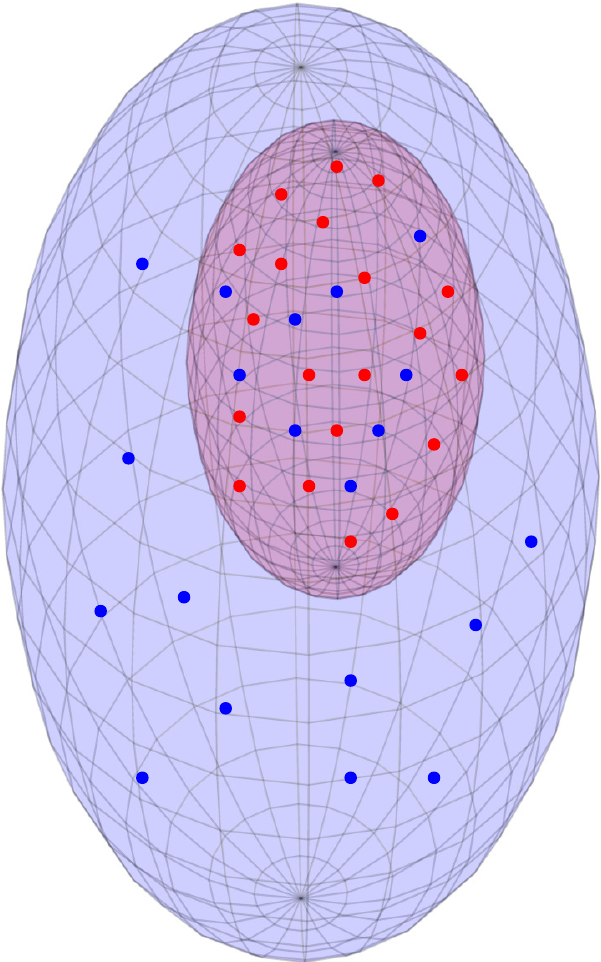}
		\label{fig:GS_case3}}
	\caption{\small{The blue sphere contains the plaintext $\mathcal{P}_{20}$, and the red one contains the ciphertext $\mathcal{C}_{20}$. A cipher that is: (a)  dimensionally  but not spatially stable, (b) neither dimensionally nor spatially stable, (c)  geometrically stable .}}
	\label{fig_geometricStability_3d}
\end{figure*}
%%%%%%%%%%%%%%%%%%%%%%%%%%%%%%%%%%%%%%%%%%%%%
\section{Geometric Stability Dimensions}
%%%%%%%%%%%%%%%%%%%%%%%%%%%%%%%%%%%%%%%%%%%%%
\label{section:Geometric_Stability}
Geometric stability is a key consideration in the design of 3D-point cloud encryption algorithms to avoid exceeding the viewing screen resolution and to avoid collisions between 3D objects. Eluard et al. \cite{eluard2013geometry}, tried to maintain the geometric stability of 3D encryption via constraining the encrypted points to fall within minimum bounding boxes of the plain point clouds. In \cite{jolfaei20143d}, the authors introduced two notions of stability for 3D point cloud encryption algorithms: dimensional stability and spatial stability. These notions are mathematically formulated as follows. Let $C^0$ be the geometric center of the ciphertext 3D-point cloud, and let $r_c$ be the radius of the bounding sphere of the cipher point cloud $\mathcal{C}_N = \{C^1, C^2, \cdots, C^N\}$. Then, $\mathcal{C}_N$ is contained in the ball $\|P-C^0\|\le r_c$. According to \cite{jolfaei20143d}, the cipher is called spatially stable if $\|C^0-P^0\| \leq r_p,$ and it is called dimensionaly stable if $r_p - \|C^0-P^0\| \geq r_c.$ The last inequality is merely $\|C^0-P^0\| \leq r_p - r_c \leq r_p,$ i.e. dimensional stability implies the spatial stability provided that $r_p \geq r_c$.
Since the spatial stability guarantees the occurrence of the encrypted point cloud within the sphere $\|P-P^0\| = r_p$, and the dimensional stability guarantees that the dimensional size of the set $\mathcal{C}_N$ does not exceed that of $\mathcal{P}_N$, we define the geometric stability of a cipher as follows.

\begin{definition}
\label{def:geometric_stable}
Let $\mathcal{P}_N$ be a plain 3D-point cloud and $\mathcal{C}_N$ be the corresponding cipher point cloud under a certain cipher. Let $P^0, C^0, r_p, r_c$ be as described above. The cipher is called dimensionally stable if
\begin{equation}\label{eq:dimensionally_stable}
r_p \geq r_c> 0.
\end{equation}
The cipher is called spatially stable if
\begin{equation}\label{eq:spatially_stable}
0 \leq \|C^0-P^0\| \leq r_p - r_c.
\end{equation}
The cipher is called geometrically stable if it is both dimensionally and spatially stable.
\end{definition} 

Figure \ref{fig_geometricStability_3d} illustrates the geometric stability in three dimensions. Before we investigate the geometric stability of ciphers based on shuffling coordinates and localized rotations, we prove via a counterexample that the algorithm of  \cite{jolfaei20143d} is not geometrically. In fact it is neither dimensionally nor spatially stable by any means.
%%%%%%%%%%%%%%%%%%%%%%%%%%%%%%%%%%%%%
\begin{example}\label{counter_ex2}
%%%%%%%%%%%%%%%%%%%%%%%%%%%%%%%%%%%%%%%%%%
Consider the 3D-point cloud $\mathcal{P}_2 = \{P^1, P^2\},$ where $P^1=(400, 9, 100)^{\top},$ and $P^2=(599, 10, 100)^{\top}$. Thus,  $r_p = 100$ and  $P^0 = (500, 10, 100)^{\top}$. Let  $K^0=( 0.7, 0.2, -0.6, 0.9, -0.8, -0.7 )^{\top}$. By using the Chebyshev map with $D=3$, the permutation map $\Pi_1 = (9, 12, 7, 11, 8, 10, 6, 1, 4, 3, 2, 5)$ and the scaling parameter $\psi=1/9$, we obtain the following cipher points
$$C^1 = \begin{pmatrix}123.6391 \\ 388.6309 \\ 575.0550 \\
\end{pmatrix}, \quad
C^2 = 
\begin{pmatrix}
151.8342 \\ -21.5335 \\ 24.5006 \\
\end{pmatrix}.
$$ 
The ciphertext $C^1, C^2$ are  far away from the given sphere as illustrated in Figure \ref{Fig:CE2}(c). Also, we notice that after the permutation, the points $P^{\prime 1}, P^{\prime 2} ,O^{\prime 1},$ and  $O^{\prime 2}$ are out of the given sphere as illustrated in Figure \ref{Fig:CE2}(b). This counterexample indicates that the cipher of \cite{jolfaei20143d} is not geometrically stable, even with  smaller $\psi$.
\end{example}
%%%%%%%%%%%%%%%%%%%%%%%%%%%%%%%%%%%%%%%%%%%
%% Figure: counterexample 2 %%
%%%%%%%%%%%%%%%%%%%%%%%%%%%%%%
\begin{figure}[!ht]
	\centering
	\begin{subfigure}[t]{.22\textwidth}
		\centering
		\includegraphics[width=\linewidth, height=4.3cm]{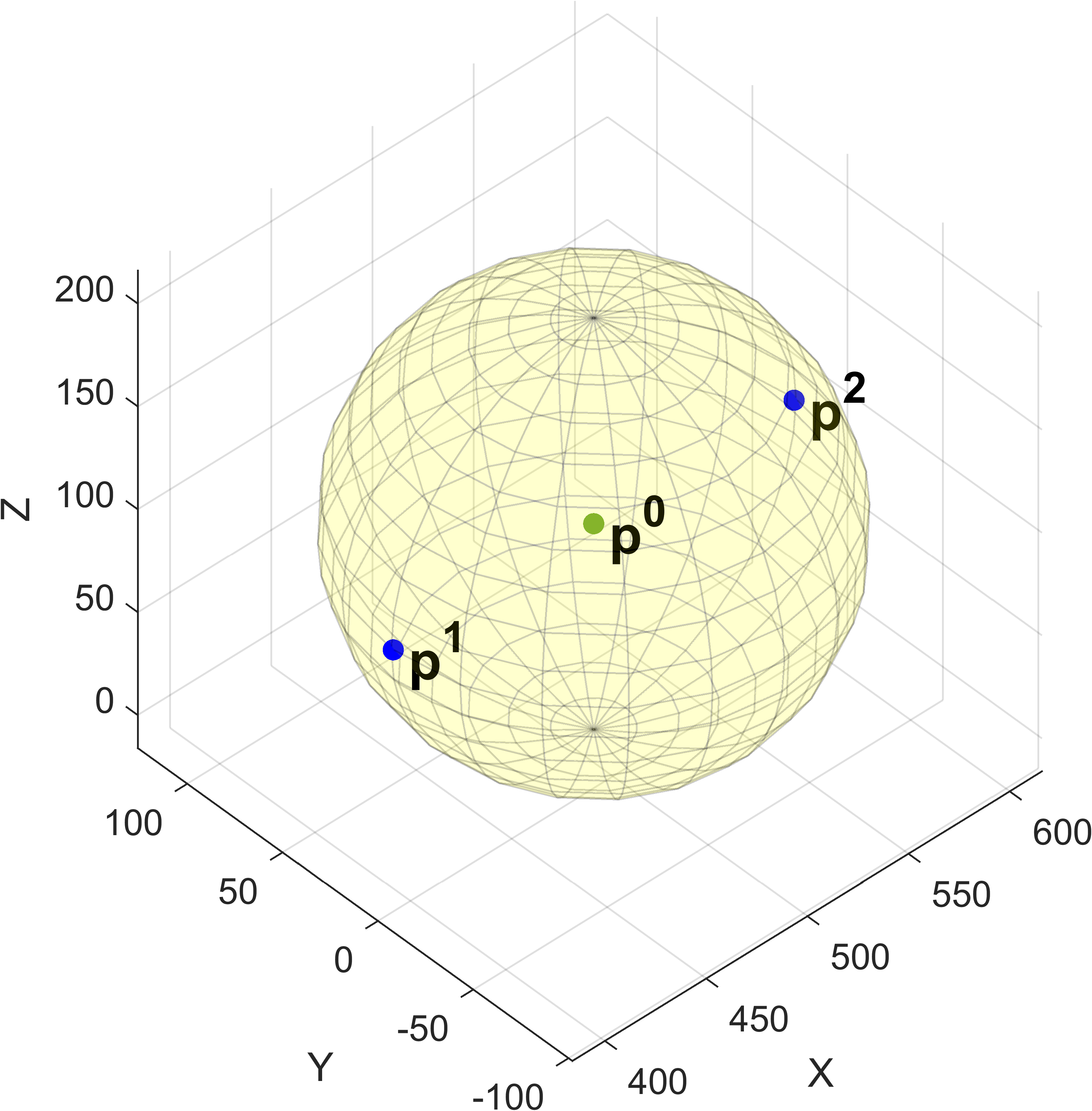}
		\caption{}\label{fig:ce2_PPC}
	\end{subfigure}
	\hspace{2mm}
	\begin{subfigure}[t]{.23\textwidth}
		\centering
		\includegraphics[width=\linewidth, height=4.3cm]{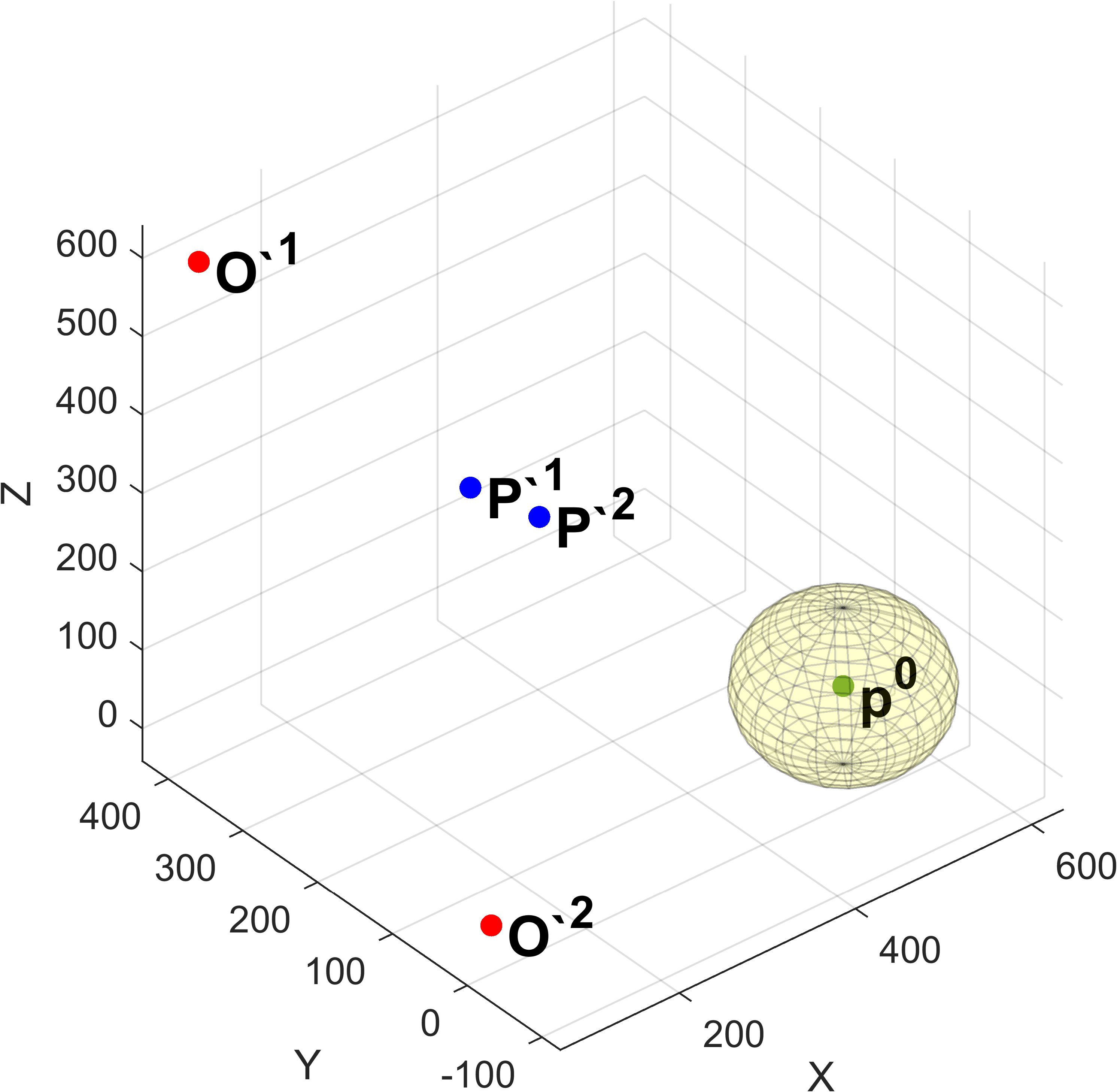}
		\caption{}\label{fig:ce2_PermutedPs}
	\end{subfigure}
	
	\medskip
	
	\begin{subfigure}[t]{.22\textwidth}
		\centering
		\vspace{0pt}% set the real top as the top
		\includegraphics[width=\linewidth, height=4.3cm]{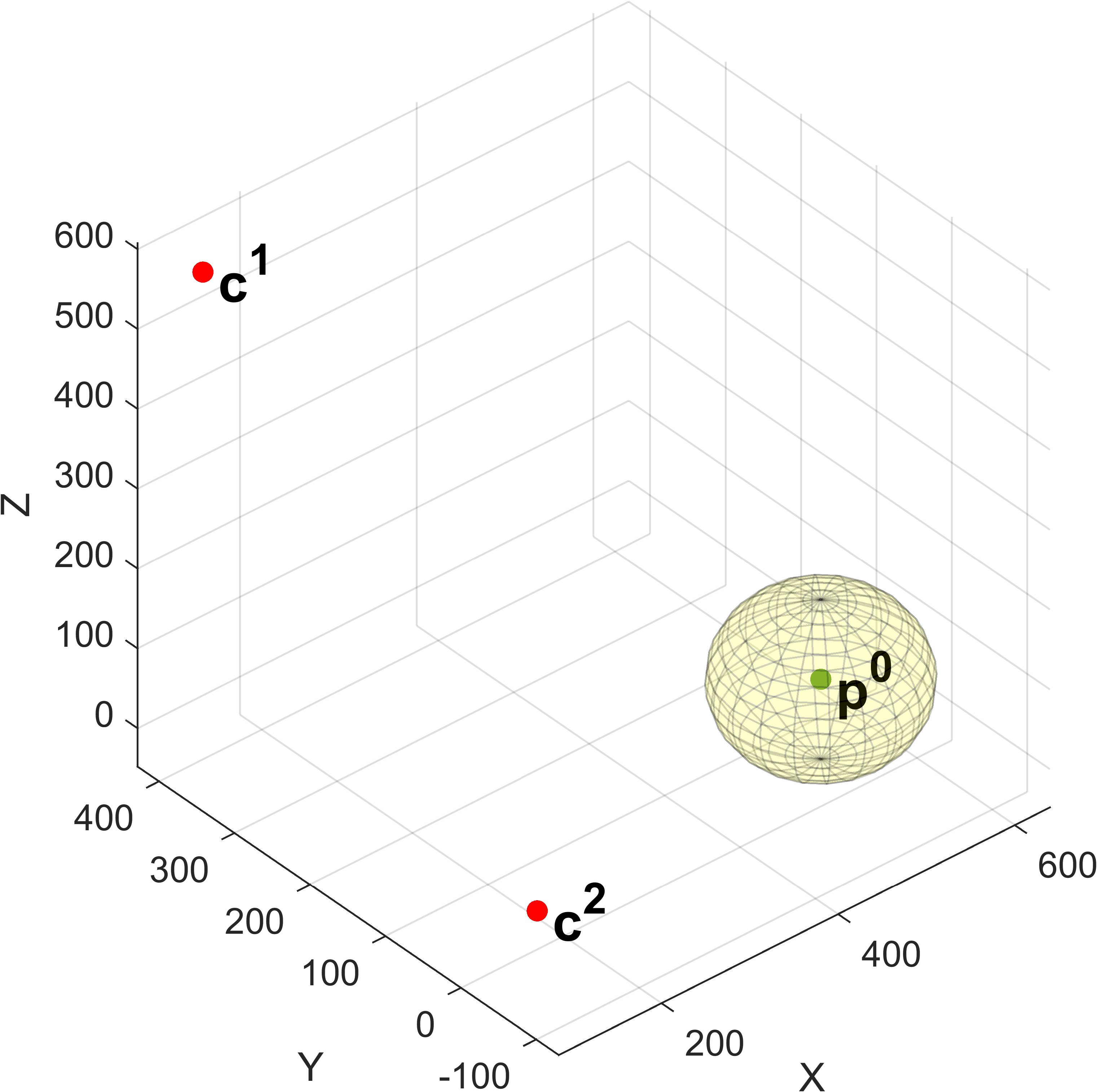}
		\caption{}\label{fig:ce2_CPC}
	\end{subfigure}
	\hspace{2mm}
	\begin{minipage}[t]{.23\textwidth}
		%\vspace{2mm}
		\caption{\small{An illustration of Example \ref{counter_ex2}. (a) Plain points $P^1$ and $P^2$. (b) Permuted Points $P^{\prime 1}, P^{\prime 2} ,O^{\prime 1} , \text{and } O^{\prime 2}$. (c) Cipher points $C^1$ and $C^2$ lie away from $P^0, r_c > r_p$.} }
		\label{Fig:CE2}
	\end{minipage}
\end{figure}

In the following, we investigate the geometric stability of encryption algorithms based on shuffling coordinates and localized rigid body rotations.  Hereafter, we follow the same notation for $\mathcal{P}_N, \mathcal{C}_N, P^0, C^0, r_p, r_c$ as mentioned above. For convenience, we assume that $C^j$ denotes $C^j_1$ and $R^j$ denotes $R^j(\alpha^j_{11},\alpha^j_{12},\alpha^j_{13}), 1 \leq j \leq N$.
%%%%%%%%%%%%%%%%%%%%%%%%%%%%%%%%%%%%%%%%%%%
%% Figure: Shuffling the coordinates ... %%
%%%%%%%%%%%%%%%%%%%%%%%%%%%%%%%%%%%%%%%%%%%
\begin{figure}[!ht]
	\centering
	\includegraphics[width=2.5in]{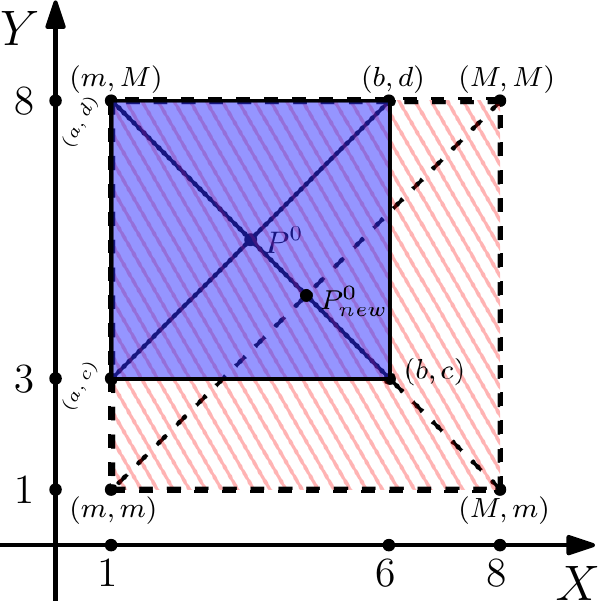}
	\caption{\small {Shuffling the coordinates of the points that lie in the blue square $R = \{(x,y)\in \mathbb{R}^2 : 1 \leq x \leq 6, 3 \leq y \leq 8\}$ will result in points that lie in the red square $R^{\prime} = \{(x,y)\in \mathbb{R}^2: 1 \leq x,y \leq 8\}$. Notice that the center $P^0 = (3.5,5.5)$ of $R$ is updated to $P^0_{new} = (4.5,4.5)$. Here $m=1, \ M=8, \ \rho = 7\sqrt{2}.$}}
	\label{Fig:Fig_100}
\end{figure}
%%%%%%%%%%%%%%%%%%%%%%%%%%%%%%%%%%%%%%%%%%%

Before deriving the mathematical proofs in $\Bbb R^3$, let us consider a 2D-setting that indicates that shuffling the coordinates of points within a certain domain may lead to a huge disorder, particularly when the center $P^0$ is not the origin. Assume that the points of $\mathcal{P}_N$ lie in a rectangle $R = \{(x,y)\in \mathbb{R}^2 : a \leq x \leq b, c \leq y \leq d\}$, centered at $P^0=(\frac{a+b}{2}, \frac{c+d}{2})$. Let $m = \min\{a, c\}$, and $M = \max\{b, d\}$. If we shuffle the coordinates of $P^1, \cdots, P^N$ to have a new set $\mathcal{P}_N^{\prime} = \{P^{\prime 1}, \cdots, P^{\prime N}\}$, then $\mathcal{P}_N^{\prime} \subseteq R^{\prime}$, where $R^{\prime} = \{(x,y)\in \mathbb{R}^2: m \leq x,y \leq M\}$ is a square centered at the updated center $P^0_{new} = (\frac{m+M}{2},\frac{m+M}{2})$. As shown in Figure \ref{Fig:Fig_100}, the farthest distance between any two points of $\mathcal{P}_N^{\prime}$ will be $\rho = \sqrt{2} (M-m)$. This is the situation when the cipher only permutes the coordinates of the plaintext.

Now we consider the case when a cipher shuffles the coordinates of the plaintext with other randomly selected points, for instance the cipher of \cite{jolfaei20143d} as shown in Figure \ref{Fig:Fig_200}. Let $\mathcal{P}_N$ lie in a circle centered at $P^0 = (p^0_1, p^0_2)$ with radius $r$ and add to $\mathcal{P}_N$ a random set $\mathcal{O}_N$ which lies in a circle centered at $P^0$ with radius $\sqrt{2}r$. The set $\mathcal{P}_N \cup \mathcal{O}_N$ lies in the square $R = \{(x,y)\in \mathbb{R}^2 : p^0_1-\sqrt{2}r \leq x \leq p^0_1+\sqrt{2}r, \ p^0_2-\sqrt{2}r \leq y \leq p^0_2+\sqrt{2}r\}$, as shown in Figure \ref{Fig:Fig_200}. Therefore, shuffling the coordinates of the points of $\mathcal{P}_N \cup \mathcal{O}_N$ will form points in the square  $R^{\prime} = \{(x,y)\in \mathbb{R}^2: m-\sqrt{2}r \leq x,y \leq M+\sqrt{2}r\}$, centered at $P^0_{new}=(\frac{m+M}{2}, \frac{m+M}{2})$, where $m = \min\{p^0_1, p^0_2\}$, and $M = \max\{p^0_1, p^0_2\}$. Thus, the farthest points among the shuffled points will be at a distance of 
\begin{equation}
\rho = 4r + \sqrt{2}\; \vert p^0_1 - p^0_2 \vert .
\end{equation}
%%%%%%%%%%%%%%%%
%% Figure: 
%%%%%%%%%%%%%%%%
\begin{figure}[!ht]
	\centering
	\includegraphics[width=2.5in]{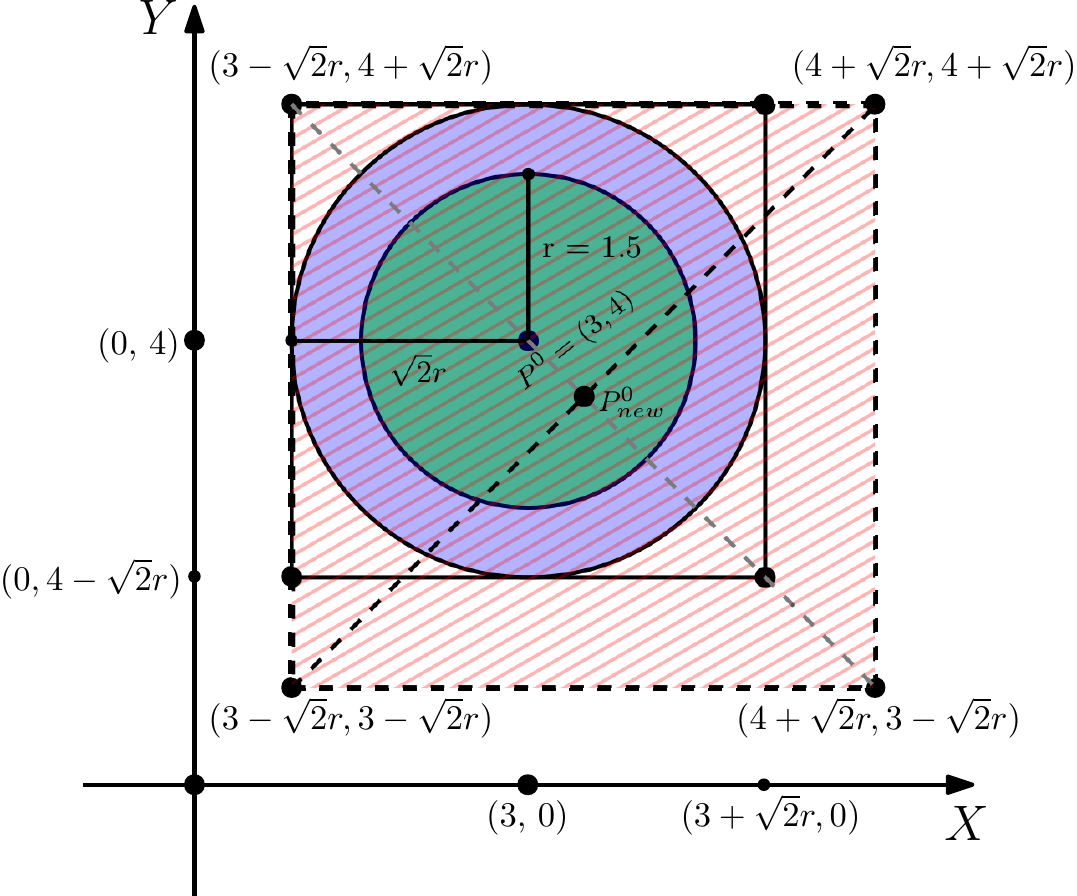}
	\caption{Points of $\mathcal{P}_N$ lie in the green circle of radius $r = 1.5$ and points of $\mathcal{O}_N$ lie in the blue circle of radius $\sqrt{2}r$. Shuffling the coordinates of $\mathcal{P}_N \cup \mathcal{O}_N$ will form points in the square $R^{\prime} = \{(x,y)\in \mathbb{R}^2: 3-\sqrt{2}r \leq x,y \leq 4+\sqrt{2}r\}$. The center $P^0 = (3,4)$ is updated to be \ $P^0_{new} = (3.5,3.5)$, and $\rho = 6 + \sqrt{2}. $}
	\label{Fig:Fig_200}
\end{figure}
%%%%%%%%%%%%%%%%
Now, we consider the 3D-setting.
%%%%%%%%%%%%%%%%
%% Figure: 
%%%%%%%%%%%%%%%%
\begin{figure}[!ht]
	\centering
	\includegraphics[width=2.5in]{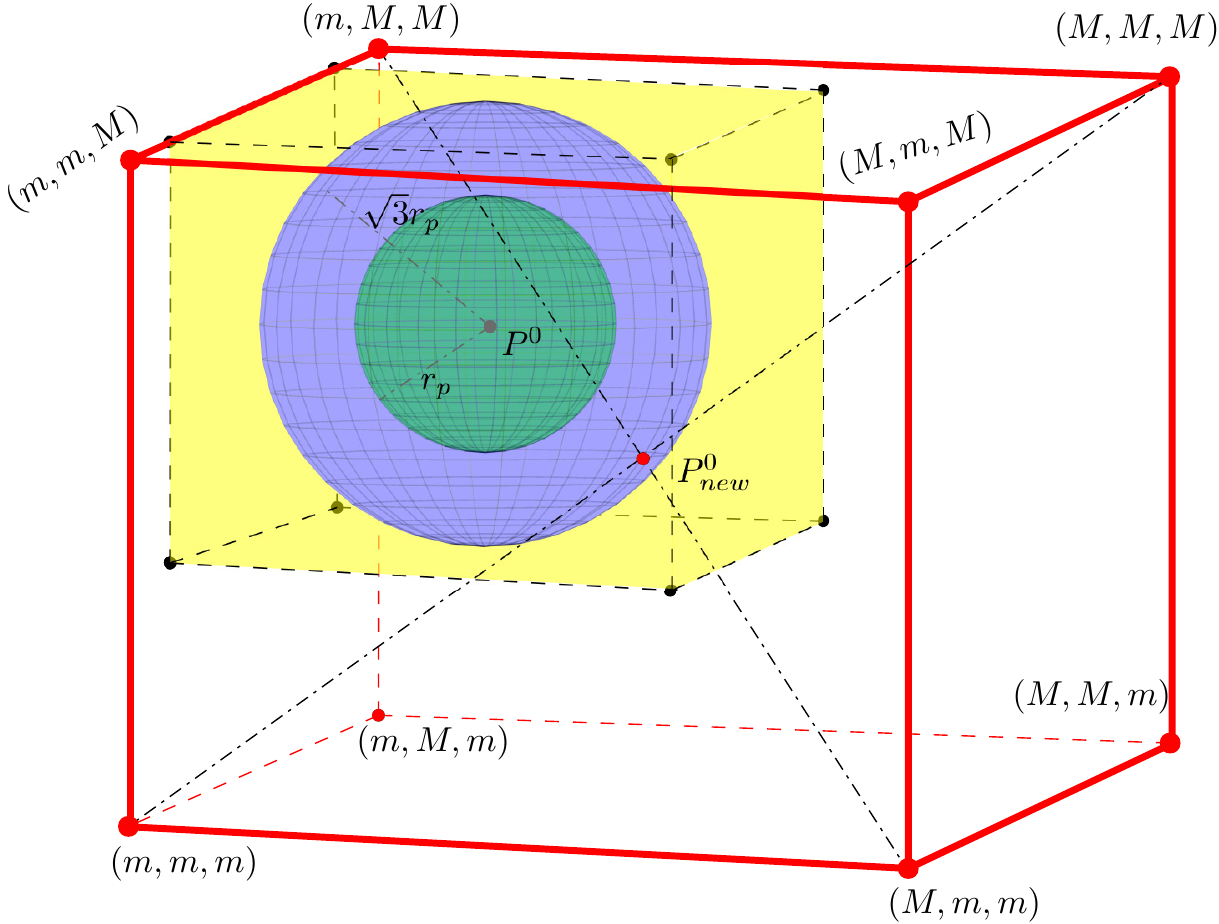}
	\caption{\small{The points of $\mathcal{P}_N$ lie in the green sphere of radius $r_p = 5$ and center $P^0$, and the points of $\mathcal{O}_N$ lie in the blue sphere of radius $\sqrt{3}r_p$ and center $P^0$. Shuffling the coordinates of the points of $\mathcal{P}_N \cup \mathcal{O}_N$ will result in points in the extended cube $R^{\prime} = \{(x,y)\in \mathbb{R}^2: m \leq x,y,z \leq M \}, M = M_0 + \sqrt{3}r_p, \ m = m_0 - \sqrt{3}r_p$. Notice that the center $P^0 = (10, 15, 20)$ is updated to the new one $P^0_{new} = (15,15,15)$ and $\rho =\sqrt{3}(M-m)= 30 + 10\sqrt{3}. $}}
	\label{fig:out_sphere_3D}
\end{figure}
%%%%%%%%%%%%%%%%
%% Lemma 1:
%%%%%%%%%%%%%%%%
\begin{lemma}\label{lem_1}
For  $j = 1, 2, \cdots, N,$ we have for $0 < \psi$, and $C^j \in \mathcal{C}_N$, generated via \textup{(\ref{eq:rot})},
\begin{equation}\label{eq:lem1_1}
\|C^j - P^0\|\leq(\psi + 1)(6 r_p + \sqrt{3} (M_0 - m_0)),
\end{equation}
where $m_0$ and $M_0$ are
\begin{equation}\label{eq:lem1_2}
m_0 := \underset{\tiny{1 \leq i\leq 3}}{\min} p^0_i, \quad  M_0 := \underset{\tiny{1 \leq i\leq 3}}{\max} p^0_i.
\end{equation}
\end{lemma}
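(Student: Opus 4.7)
The plan is to bound $\|C^j - P^0\|$ by splitting it via the triangle inequality and then controlling each piece by the diameter of a bounding cube for the shuffled points. Starting from $C^j = \psi R^j[P^{\prime j} - O^{\prime j}] + O^{\prime j}$, I would subtract $P^0$, take norms, and apply the triangle inequality to obtain
\[
\|C^j - P^0\| \le \psi\,\|R^j[P^{\prime j} - O^{\prime j}]\| + \|O^{\prime j} - P^0\|.
\]
Because $R^j = R_1^j R_2^j R_3^j$ is a product of axis rotations and hence orthogonal, the first term simplifies to $\psi\,\|P^{\prime j} - O^{\prime j}\|$. The task therefore reduces to producing a single uniform bound for both $\|P^{\prime j} - O^{\prime j}\|$ and $\|O^{\prime j} - P^0\|$.

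The geometric step is to enclose $\mathcal{P}_N \cup \mathcal{O}_N$ in an axis-aligned cube, and then to observe that the coordinate shuffle keeps the new points $P^{\prime j}, O^{\prime j}$ inside that same cube. Since $\|P^j - P^0\| \le r_p \le \sqrt{3}\,r_p$ and $\|O^j_{\upsilon} - P^0\| \le \sqrt{3}\,r_p$, each coordinate of every point of $\mathcal{P}_N \cup \mathcal{O}_N$ lies in the interval $[p^0_i - \sqrt{3}\,r_p,\, p^0_i + \sqrt{3}\,r_p]$, and hence globally in $I = [m_0 - \sqrt{3}\,r_p,\, M_0 + \sqrt{3}\,r_p]$. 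The $6N$ scalar values in the array $X$ therefore all belong to $I$, and since a permutation merely reassigns these same values, every shuffled point $P^{\prime j}, O^{\prime j}$ lies in the cube $R^{\prime} = I^3$. The space diagonal of $R^{\prime}$ has length $\sqrt{3}(M-m) = 6r_p + \sqrt{3}(M_0 - m_0)$, where $m = m_0 - \sqrt{3}\,r_p$ and $M = M_0 + \sqrt{3}\,r_p$. Because $m_0 \le p^0_i \le M_0$ for each $i$, the center $P^0$ itself also lies in $R^{\prime}$.

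Feeding these bounds back into the triangle inequality, both $\|P^{\prime j} - O^{\prime j}\|$ and $\|O^{\prime j} - P^0\|$ are bounded by the diagonal $6r_p + \sqrt{3}(M_0 - m_0)$, so
\[
\|C^j - P^0\| \le \psi\bigl(6r_p + \sqrt{3}(M_0 - m_0)\bigr) + \bigl(6r_p + \sqrt{3}(M_0 - m_0)\bigr) = (\psi + 1)\bigl(6r_p + \sqrt{3}(M_0 - m_0)\bigr),
\]
which is exactly \textup{(\ref{eq:lem1_1})}. The argument is essentially routine; the only subtle point, and the mild obstacle, is noting that coordinate shuffling preserves the enclosing interval slot-by-slot (each new coordinate is some old coordinate and hence still in $I$), not merely as a multiset. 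Using the weaker $\sqrt{3}\,r_p$ envelope for both point families, rather than the sharper coordinatewise bound $r_p$, is what produces the $6r_p$ term and keeps the statement consistent with the $\sqrt{3}\,r_p$ estimate for $\mathcal{O}_N$ declared earlier.
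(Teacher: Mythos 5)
Your proof is correct and follows essentially the same route as the paper's: the same triangle-inequality decomposition using orthogonality of $R^j$, the same enclosing cube $[m_0-\sqrt{3}r_p,\,M_0+\sqrt{3}r_p]^3$ for the shuffled points, and the same diagonal bound $6r_p+\sqrt{3}(M_0-m_0)$ applied to both $\|P^{\prime j}-O^{\prime j}\|$ and $\|O^{\prime j}-P^0\|$. Your explicit remark that $P^0$ itself lies in the cube is a small improvement in rigor over the paper, which asserts the corresponding bound without comment.
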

\begin{proof}
Let $\mathcal{O}_N = \{O^1, \cdots, O^N\}$ be 3D points randomly created via (\ref{Eq: Randpseudopts}), where $\upsilon = 1$. Thus, $\| O^j - P^0 \| \leq \sqrt{3}r_p, P^0 = (p^0_1, p^0_2, p^0_3)^{\top}$. Hence, the set $\mathcal{P}_N \cup \mathcal{O}_N$ lie in the cube,
\begin{equation}\label{eq:lem1_3}
\begin{split}
\Omega = \left\{  (x^1, x^2, x^3)^\top \in \mathbb{R}^3: \vert x^i - p^0_i \vert \leq \sqrt{3}r_p \right\},
\end{split}
\end{equation}
which is depicted as the yellow cube in Figure \ref{fig:out_sphere_3D}. Then, the sets 		
\[
\mathcal{P}_N^{\prime} = \{ P^{\prime 1}, P^{\prime 2}, \cdots, P^{\prime N}\},\,
\mathcal{O}_N^{\prime} = \{ O^{\prime 1}, O^{\prime 2}, \cdots, O^{\prime N}\},
\]
resulting from shuffling the coordinates of $\mathcal{P}_N \cup \mathcal{O}_N$ will lie in the cube $\Omega^{\prime}$ of all points $ (x^1, x^2, x^3)^\top\in\Bbb R^3$, for which 
\begin{equation}\label{eq:lem1_5}
(m_0 -\sqrt{3}r_p) \leq x^i \leq (M_0 +\sqrt{3}r_p), i=1,2,3.
\end{equation}
The cube $\Omega^{\prime}$ is illustrated as the red cube in Figure \ref{fig:out_sphere_3D}. Noting that the side length of the cube $(\Omega^{\prime})$ is $2\sqrt{3}r+(M_0 - m_0)$, then for $j = 1, 2, \cdots, N$, we obtain
\begin{equation}\label{eq:lem1_6}
\| P^{\prime j} - O^{\prime j} \| \leq 6r_p + \sqrt{3} (M_0 - m_0),
\end{equation}
\begin{equation}\label{eq:lem1_7}
\| P^{\prime j} - P^0 \|, \| O^{\prime j} - P^0 \| \leq 6r_p + \sqrt{3} (M_0 - m_0).
\end{equation}
Now from (\ref{eq:rot}), and using the triangle inequality, we obtain
\begin{eqnarray*}\label{eq:lem1_8}
\| C^j - P^0 \| &\leq& \psi \|R^j\| \| P^{\prime j} - O^{\prime j} \|+ \| O^{\prime j} - P^0 \| 
\\ &\leq& \psi \cdot 1 \cdot [6r_p \ + \ \sqrt{3}(M_0 - m_0)]  
\\ &\quad& + \ 6r_p \ + \sqrt{3} (M_0 - m_0) 
\\ &=& (6 \psi + 6) r_p + (\sqrt{3} \psi + \sqrt{3}) (M_0 - m_0),
\end{eqnarray*}
which is (\ref{eq:lem1_1}). Here, we have used the fact that $\|R^j\| = 1$ since $R^j$ is orthogonal.\qed
\end{proof}

\begin{remark}
In \cite{jolfaei20143d}, $\psi$ is taken as a scaling factor that satisfies $0 < \psi \leq \frac{1}{9}$. From estimate (\ref{eq:lem1_1}), it is clear that the ciphertext $\mathcal{C}_N$ is widely deviated from the origin, apart from any choice of $\psi> 0$,  no matter however small it is taken.  Nevertheless,  the instability of the cipher of \cite{jolfaei20143d} is already proved by Example \ref{counter_ex2}. The inequality in (\ref{eq:lem1_1}) indicates the expected amount of instability, which will always be present even if $\psi$ is taken arbitrary small. The next corollary indicates also that taking $P^0$ to be the origin will not lead to geometric stability. The bound in (\ref{eq:lem1_7}) may become tighter depending on $P^{\prime j}, O^{\prime j}$. As a consequence of Lemma \ref{lem_1}, \cite[Lemma 1]{jolfaei20143d} is flawed and the subsequent results are not correct.
\end{remark}

\begin{corollary}\label{cor_1}
If $P^0$ is the origin $(0, 0, 0)^\top$, then
\begin{equation}\label{eq:cor1_1}
\| C^j - P^0 \| \leq 6(\psi + 1) r_p.
\end{equation}
\end{corollary}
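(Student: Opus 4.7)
The plan is to derive the corollary as an immediate specialization of Lemma \ref{lem_1}, since the hypothesis $P^0 = (0,0,0)^\top$ directly collapses the $(M_0-m_0)$ term in the estimate (\ref{eq:lem1_1}).

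First I would unpack the definitions in (\ref{eq:lem1_2}): with $P^0 = (0,0,0)^\top$ we have $p^0_1 = p^0_2 = p^0_3 = 0$, so
\begin{equation*}
m_0 = \min_{1 \leq i \leq 3} p^0_i = 0, \qquad M_0 = \max_{1 \leq i \leq 3} p^0_i = 0,
\end{equation*}
whence $M_0 - m_0 = 0$. Substituting this into the right-hand side of (\ref{eq:lem1_1}) eliminates the $\sqrt{3}(M_0 - m_0)$ term and immediately yields
\begin{equation*}
\|C^j - P^0\| \leq (\psi+1)\bigl(6 r_p + \sqrt{3}\cdot 0\bigr) = 6(\psi+1) r_p,
\end{equation*}
which is (\ref{eq:cor1_1}).

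Since the corollary is a one-line specialization of the lemma, there is no substantive obstacle; I would simply invoke Lemma \ref{lem_1}. The only remark worth adding after the proof (and which the surrounding commentary already suggests) is that even in this most favorable centering, the bound grows like $6 r_p$ rather than approaching $r_p$, so geometric stability in the sense of Definition \ref{def:geometric_stable} still fails regardless of how small $\psi$ is chosen.
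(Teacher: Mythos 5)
Your proposal is correct and coincides with the paper's own proof, which likewise obtains the corollary by setting $m_0 = M_0 = 0$ in the estimate (\ref{eq:lem1_1}) of Lemma \ref{lem_1}. The only difference is that you spell out the evaluation of $m_0$ and $M_0$ from (\ref{eq:lem1_2}) explicitly, which the paper leaves implicit.
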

\begin{proof}
The result follows by substituting $m_0 = M_0 = 0$ in (\ref{eq:lem1_1}).\qed
\end{proof}
Now, we reconsider Equation (\ref{eq:rot}) above, and modify it to be
\begin{equation}\label{eq:rotation_modified}
C^j = \psi \left[ R^j (P^{\prime j} - O^{\prime j}) + O^{\prime j} \right] , \quad 1 \leq j \leq N.
\end{equation}

\begin{lemma}\label{lem_3}
Let $C^j$ be as defined in \textup{(\ref{eq:rotation_modified})}. Then, for $j = 1, 2, \cdots, N$, we have
\begin{equation}\label{eq:lem3_1}
\| C^j - P^0 \| \leq \psi \left[ 12 r_p + 3 \sqrt{3}(M_0 - m_0) \right] + (1- \psi) \| P^0 \|.
\end{equation}
In particular, if $P^0$ is the origin, then
\begin{equation}\label{eq:lem3_2}
\| C^j - P^0 \| = \| C^j \|\leq 12 \psi r_p,
\end{equation}
i.e. the cipher will be geometrically stable if $0 < \psi \leq \frac{1}{12}$.
\end{lemma}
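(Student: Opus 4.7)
The plan is to mimic the proof of Lemma \ref{lem_1}, but track carefully the extra factor of $\psi$ that in \textup{(\ref{eq:rotation_modified})} now multiplies $O^{\prime j}$ as well as the rotated vector. The key algebraic trick is to rewrite $C^j - P^0$ by adding and subtracting $\psi P^0$, obtaining
\[
C^j - P^0 \;=\; \psi\, R^j(P^{\prime j} - O^{\prime j}) \;+\; \psi\,(O^{\prime j} - P^0) \;-\; (1-\psi)\,P^0 .
\]
This decomposition is designed precisely so that each of the first two summands has norm already controlled by the analysis carried out for Lemma \ref{lem_1}, while the residual $(1-\psi)P^0$ accounts for the matching term $(1-\psi)\|P^0\|$ in the claimed inequality \textup{(\ref{eq:lem3_1})}.

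Next I would apply the triangle inequality and use the orthogonality of $R^j$, which yields $\|R^j\|=1$, to obtain
\[
\|C^j - P^0\| \;\le\; \psi\,\|P^{\prime j} - O^{\prime j}\| \;+\; \psi\,\|O^{\prime j} - P^0\| \;+\; (1-\psi)\,\|P^0\| .
\]
Since the shuffling step still places $P^{\prime j}$ and $O^{\prime j}$ inside the extended cube $\Omega^{\prime}$ defined via \textup{(\ref{eq:lem1_5})}, the diameter bounds \textup{(\ref{eq:lem1_6})}--\textup{(\ref{eq:lem1_7})} from the proof of Lemma \ref{lem_1} transfer verbatim. Substituting these bounds, optionally loosened by the sub-estimate $\|P^{\prime j}-O^{\prime j}\|\le \|P^{\prime j}-P^0\|+\|O^{\prime j}-P^0\|$ to match the coefficient $3\sqrt{3}$ in \textup{(\ref{eq:lem3_1})}, immediately produces the asserted estimate.

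The particular case $P^0=(0,0,0)^{\top}$ is then a direct specialization: setting $m_0=M_0=0$ and $\|P^0\|=0$ collapses \textup{(\ref{eq:lem3_1})} to \textup{(\ref{eq:lem3_2})}, namely $\|C^j\|\le 12\psi r_p$. To conclude geometric stability for $\psi\le 1/12$, observe that \textup{(\ref{eq:lem3_2})} places the whole of $\mathcal{C}_N$ inside the ball of radius $12\psi r_p$ centered at $P^0$; taking this same point as the center $C^0$ of the ciphertext bounding sphere gives $\|C^0-P^0\|=0$, so spatial stability in the sense of Definition \ref{def:geometric_stable} is automatic, while $r_c\le 12\psi r_p\le r_p$ supplies dimensional stability exactly when $\psi\le \tfrac{1}{12}$.

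The only genuinely delicate point is the opening identity: once the $\pm \psi P^0$ split is written down, everything else is a routine application of the triangle inequality together with the cube-diameter bounds already proved. I would expect the stability-deduction step at the end to require the mildest attention, since it depends on the interpretation of the "geometric center" $C^0$ in Definition \ref{def:geometric_stable}; choosing $C^0 = P^0$ for the bounding sphere, as done here, is the cleanest way to land the threshold $\psi\le 1/12$ claimed in the statement.
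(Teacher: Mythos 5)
Your proof is correct and follows essentially the same route as the paper: the paper applies the triangle inequality to (\ref{eq:rotation_modified}) and then splits $\|\psi O^{\prime j}-P^0\|$ by adding and subtracting $\psi P^0$ (its inequality (\ref{eq:lem3_4})), which is exactly your $\pm\psi P^0$ decomposition carried out one step later, and it likewise imports the cube bounds (\ref{eq:lem1_6})--(\ref{eq:lem1_7}) verbatim. One small note: no loosening is needed to reach (\ref{eq:lem3_1}) --- direct substitution already gives $\psi\left[12r_p+2\sqrt{3}(M_0-m_0)\right]+(1-\psi)\|P^0\|$, which implies the stated (slightly weaker) bound, whereas your suggested sub-estimate $\|P^{\prime j}-O^{\prime j}\|\le\|P^{\prime j}-P^0\|+\|O^{\prime j}-P^0\|$ would inflate the $12r_p$ term to $18r_p$ rather than matching the coefficient $3\sqrt{3}$.
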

\begin{proof}
Using (\ref{eq:rotation_modified}) and the triangle inequality,
\begin{equation}\label{eq:lem3_3}
\| C^j - P^0 \| \leq \psi \|R^j\| \| P^{\prime j} - O^{\prime j} \|+ \| \psi O^{\prime j} - P^0 \|.
\end{equation}
Using the triangle inequality once more
\begin{equation} \label{eq:lem3_4}
\begin{aligned}[b]
\| \psi O^{\prime j} - P^0 \| &\leq \| \psi O^{\prime j} - \psi P^0 + \psi P^0 - P^0   \| \\
&\leq \psi \| O^{\prime j} - P^0 \| + (1 - \psi) \| P^0 \|.
\end{aligned}
\end{equation}
Substituting (\ref{eq:lem1_7}) in (\ref{eq:lem3_4}) and combining the result with (\ref{eq:lem3_3}) and (\ref{eq:lem1_6})  yield (\ref{eq:lem3_1}). The inequality in (\ref{eq:lem3_2}) and geometric stability follow when $P^0$ is set to $(0,0,0)^\top$ in (\ref{eq:lem3_1}).\qed
\end{proof} 
%%%%%%%%%%%%%%%%
%%%%%%%%%%%%%%%%%%%%%%%%%%%%%%%%%%%%%%%%%%%
\begin{remark} The previous proofs and Example\ref{counter_ex2} prove that  the cipher of \cite{jolfaei20143d} in not geometrically stable. However, it is worthwhile to mention that this will happen most likely when the points of $\mathcal{P}_N$ and $\mathcal{O}_N$ lie on the boundaries of their domains and based on the permutation process. We see also that the scaling factor must be  $\frac{1}{12}$, not $\frac{1}{9}$ in the case when $P^0$ is the origin and we use (\ref{eq:rotation_modified}).
\end{remark} 
%%%%%%%%%%%%%%%%%%%%%%%%%%%%%%%%%%%%%%%%%%%%%%%%%%%%%%%%%%%%%%%%%%%%%%%%%%%%%%%%%%%%%%%%%%%%%%%%%%%%%%%%%%%%%%%%%%%%%%%%%%%%%%%%%%%%%%%%%%%%%%%%%%%%%%%%%%%%%%%%%%%%%

\section{Conclusions}
\label{section:Conclusions}
This work presents a comprehensive study of geometric stability of 3D-point cloud encryption techniques, which are based on shuffling coordinates and rigid body rotation. We investigate spatial and dimensional stabilities of 3D-point cloud ciphers. We indicate through detailed investigations with mathematical proofs that ciphers, which are based on permuting 3D-point cloud coordinates with other random points may lead to a geometric disorder which is greater than what was thought in relevant literature. The technique demonstrated here is applicable to any cipher that is based on shffling coordinates and rigid body motion, apart from the faulty algorithm of \cite{jolfaei20143d}.
%%%%%%%%%%%%%%%%%%%%%%%%%%%%%%%%%%%%%%%%%%%%%%%%%%%%%%%%%%%%%%%%%%%%%%%%%%%%%%%%%%%%%%%%%%%%%%%

%\bibliography{sn-bibliography}% common bib file
%% if required, the content of .bbl file can be included here once bbl is generated
%%\input sn-article.bbl
%% Default %%
%\input sn-sample-bib.tex
%\clearpage
\bibliographystyle{abbrv}
{\footnotesize \bibliography{refs}}

\begin{thebibliography}{10}

\bibitem{eluard2013geometry}
M.~{\'E}luard, Y.~Maetz, and G.~Do{\"e}rr.
\newblock Geometry-preserving encryption for 3{D} meshes.
\newblock {\em Actes Compress. Représent. Signaux Audiovis.}, pages 7--12,
  November 2013.

\bibitem{GEISEL1984263}
T.~Geisel and V.~Fairen.
\newblock Statistical properties of chaos in {C}hebyshev maps.
\newblock {\em Physics Letters A}, 105(6):263--266, 1984.

\bibitem{gilbert2004models}
J.~K. Gilbert.
\newblock Models and modelling: Routes to more authentic science education.
\newblock {\em International Journal of Science and Mathematics Education},
  2(2):115--130, June 2004.

\bibitem{goldstein:mechanics}
H.~Goldstein.
\newblock {\em Classical Mechanics}.
\newblock Addison-Wesley, Reading, 1980.

\bibitem{gomez2015intelligent}
M.~J. Gómez, F.~García, D.~Martín, A.~{de la Escalera}, and J.~M. Armingol.
\newblock Intelligent surveillance of indoor environments based on computer
  vision and 3{D} point cloud fusion.
\newblock {\em Expert Systems with Applications}, 42(21):8156--8171, November
  2015.

\bibitem{jia2019encryption}
C.~Jia, T.~Yang, C.~Wang, B.~Fan, and F.~He.
\newblock Encryption of 3{D} point cloud using chaotic cat mapping.
\newblock {\em 3{D} Research}, 10(1):4, January 2019.

\bibitem{jin20163d}
X.~Jin, Z.~Wu, C.~Song, C.~Zhang, and X.~Li.
\newblock 3{D} point cloud encryption through chaotic mapping.
\newblock In E.~Chen, Y.~Gong, and Y.~Tie, editors, {\em Advances in Multimedia
  Information Processing - PCM 2016}, pages 119--129, Cham, 2016. Springer
  International Publishing.

\bibitem{jin20173d}
X.~Jin, S.~Zhu, C.~Xiao, H.~Sun, X.~Li, G.~Zhao, and S.~Ge.
\newblock 3{D} textured model encryption via 3{D} {L}u chaotic mapping.
\newblock {\em Science China Information Sciences}, 60(12):122107, November
  2017.

\bibitem{jolfaei20143d}
A.~Jolfaei, X.-W. Wu, and V.~Muthukkumarasamy.
\newblock A 3{D} object encryption scheme which maintains dimensional and
  spatial stability.
\newblock {\em IEEE Transactions on Information Forensics and Security},
  10(2):409--422, February 2015.

\bibitem{lee2017fundamentals}
J.-Y. Lee, J.~An, and C.~K. Chua.
\newblock Fundamentals and applications of 3{D} printing for novel materials.
\newblock {\em Applied Materials Today}, 7:120--133, June 2017.

\bibitem{li2005collaborative}
W.~Li, W.~Lu, J.~Fuh, and Y.~Wong.
\newblock Collaborative computer-aided design—research and development
  status.
\newblock {\em Computer-Aided Design}, 37(9):931--940, August 2005.

\bibitem{NIANSHENG2011761}
L.~Nian-sheng.
\newblock Pseudo-randomness and complexity of binary sequences generated by the
  chaotic system.
\newblock {\em Communications in Nonlinear Science and Numerical Simulation},
  16(2):761--768, February 2011.

\bibitem{pal2001easy}
P.~Pal.
\newblock An easy rapid prototyping technique with point cloud data.
\newblock {\em Rapid Prototyping Journal}, 7(2):82--90, January 2001.

\bibitem{popkonstantinovic20123d}
B.~Popkonstantinovi{\'c}, S.~Krasi{\'c}, M.~Dimitrijevi{\'c}, and
  B.~Popovi{\'c}.
\newblock 3{D} characters modeling and animation.
\newblock {\em {M}achine {D}esign}, 4(2):117--122, April 2012.

\bibitem{scheenstra2005survey}
A.~Scheenstra, A.~Ruifrok, and R.~C. Veltkamp.
\newblock A survey of 3{D} face recognition methods.
\newblock In T.~Kanade, A.~Jain, and N.~K. Ratha, editors, {\em Audio- and
  Video-Based Biometric Person Authentication}, pages 891--899, Berlin,
  Heidelberg, 2005. Springer.

\bibitem{thomas2012survey}
B.~H. Thomas.
\newblock A survey of visual, mixed, and augmented reality gaming.
\newblock {\em Computers in Entertainment {(CiE)}}, 10(1):1--33, October 2012.

\bibitem{vernon2002benefits}
T.~Vernon and D.~Peckham.
\newblock The benefits of 3{D} modelling and animation in medical teaching.
\newblock {\em Journal of Audiovisual Media in Medicine}, 25(4):142--148, July
  2002.

\end{thebibliography}

\end{document}